\documentclass[conference]{IEEEtran}
\IEEEoverridecommandlockouts
\usepackage{cite}
\usepackage{amsmath,amssymb,amsfonts}
\usepackage{algorithm}
\usepackage{bm}
\usepackage{algorithmic}
\usepackage{graphicx}
\usepackage{url}
\usepackage{textcomp}
\usepackage{amsthm}
\newtheorem{theorem}{Theorem}
\newtheorem{remark}{Remark}
\usepackage{ulem}

\usepackage[T1]{fontenc}

\usepackage{xcolor}
\def\BibTeX{{\rm B\kern-.05em{\sc i\kern-.025em b}\kern-.08em
    T\kern-.1667em\lower.7ex\hbox{E}\kern-.125emX}}

\begin{document}

\title{CoCoFL: Continual Computing for Federated Learning over Intermittent Satellite-Ground Links}

\author{\IEEEauthorblockN{Yun Shen\textsuperscript{$1$}, Kun Guo\textsuperscript{$1*$}, Xi Yang\textsuperscript{$1$}, Yaoqi Liu\textsuperscript{$2$}, Yisheng Zhao\textsuperscript{$3$}, and Wei Feng\textsuperscript{$4$}\\
\IEEEauthorblockA{\textsuperscript{$1$}School of Information and Electronic Engineering, East China Normal University, Shanghai, China\\
\textsuperscript{$2$}Institute of Computing Technology, Chinese Academy of Sciences, Beijing, China\\
\textsuperscript{$3$}Hangzhou Institute for Advanced Study, University of Chinese Academy of Sciences, Hangzhou, China\\
\textsuperscript{$4$}Department of Electronic Engineering, Tsinghua University, Beijing, China\\
\textsuperscript{$*$}Correspondence: kguo@cee.ecnu.edu.cn
\thanks{This research was supported by the National Natural Science Foundation of China, under Grant 62571191 and 62425110.}
}
}}

\maketitle

\begin{abstract}

Low earth orbit (LEO) satellite constellations enable geographically distributed ground devices to collaboratively train a global model via federated learning (FL) without sharing raw data, with applications in environmental monitoring and disaster prediction.
However, in satellite-assisted FL scenarios, intermittent satellite–ground links allow only a subset of devices to participate in global aggregation within each visibility window, leaving unscheduled devices idle and their local computational and data resources underutilized.
Under partial device participation, data heterogeneity among devices may bias the global model toward certain devices, thereby deteriorating learning performance.
In this regard, we propose a continual computing based federated learning framework, referred to as CoCoFL, in which scheduled devices participate in the global model aggregation, while unscheduled devices continue updating their local models taking into account model staleness.
Guided by the convergence analysis of CoCoFL and subject to visible-window-related time constraints, we jointly optimize the device scheduling and the number of local epochs for scheduled and unscheduled devices. Experimental results demonstrate that CoCoFL achieves faster convergence, lower training loss, and higher test accuracy compared with baselines.
\end{abstract}

\begin{IEEEkeywords}
Continual computing, federated
learning, intermittent communication
\end{IEEEkeywords}

\section{Introduction}

Low earth orbit (LEO) satellite constellations offer global connectivity for ground devices deployed in remote areas, which collect large and highly distributed datasets for applications such as environmental monitoring and disaster prediction \cite{Wu2025iotj}.
However, transmitting these datasets to cloud servers via satellite–ground links introduces significant communication overhead. Federated learning (FL), a distributed learning paradigm, allows ground devices to collaboratively train a global model without sharing raw data \cite{mcmahan2017communication}. In this regard, satellite-assisted FL, where LEO satellites act as parameter servers, has attracted increasing attention.


Recent studies have explored satellite-assisted FL~\cite{chen2025tmc, fang2023twc, Han2024JSAC}, accounting for satellite mobility and constellation topology to reduce communication overhead and mitigate the effects of data heterogeneity. Nevertheless, most existing works are designed for continuous connectivity between satellites and ground devices, neglecting the scenario of intermittent satellite–ground links. In such cases, ground devices can communicate with satellites only within limited visibility windows~\cite{Lin2025TMC}, which restricts the number of devices scheduled for global aggregation in each round. In conventional synchronous FL, scheduled devices perform local training and participate in global aggregation, while unscheduled devices remain idle, resulting in underutilization of their local computational and data resources. Moreover, heterogeneity in local dataset sizes and class distributions across devices leads to uneven local training delays and performance degradation ~\cite{XuefeiTMC,Luo2022Infocom}.

Semi-asynchronous FL aggregates model updates based on arrival thresholds or deadlines and is particularly well suited for scenarios with partial device participation in each training round. In addition, it more effectively accommodates variations in local computation delays arising from device heterogeneity compared to synchronous FL. Nevertheless, existing semi-asynchronous FL approaches generally assume continuous connectivity between satellites and ground devices, allowing devices to upload model parameters to the parameter server immediately after completing local training~\cite{Ma2021JSAC, You2023TWC, Chen2025Tcom}. This assumption limits the direct applicability of conventional semi-asynchronous FL to satellite-assisted FL scenarios with intermittent satellite–ground links. 

To address these challenges, we propose CoCoFL, a continual computing based federated learning framework over intermittent satellite–ground links.
In CoCoFL, unscheduled devices continue local training under model staleness constraints, temporarily caching their updates for transmission during subsequent satellite passes.
This design parallelizes local updates of unscheduled devices with global aggregation of scheduled devices to fully exploit local computational and data resources.
Furthermore, we conduct a convergence analysis and develop a two-level optimization algorithm within CoCoFL guided by its results to jointly optimize device scheduling and local epochs.
Extensive simulations demonstrate that CoCoFL significantly accelerates convergence and improves both training loss and test accuracy compared to baselines.

The remainder of this paper is organized as follows.
Section~II describes the system model.
Section~III presents the convergence analysis and problem formulation.
Section~IV introduces the proposed learning acceleration algorithm, followed by the experimental results in Section~V.
Finally, Section~VI concludes the paper.

\section{System Model}
In this section, we elaborate on the learning workflow of the proposed CoCoFL framework, followed by the corresponding communication and computing models for further optimization of the learning process.

\subsection{Learning Workflow of the Proposed CoCoFL \label{sub:CoCoFL}}

We consider an FL scenario in which multiple ground devices are deployed in remote areas, with LEO satellites sequentially serving as parameter servers. To clarify our contribution, we assume at most one LEO satellite acts as the parameter server during each visibility window. As satellites move along their orbits, the parameter server role is handed over to the next satellite entering the region.

Let the set of ground devices be denoted by 
$\mathcal{I}=\{1,2,\ldots,I\}$, where $I$ is the total number of devices. 
Each device $i\in\mathcal{I}$ holds a local dataset $\mathcal{D}_i$ with size 
$D_i = |\mathcal{D}_i|$, and the total number of samples is 
$D = \sum_{i\in\mathcal{I}} D_i$. The goal of FL is to minimize the weighted sum of local loss functions across all devices: 
\begin{equation}
F(\boldsymbol{w}) = \sum_{i \in \mathcal{I}} \beta_i F_i(\boldsymbol{w}),
\label{eq:global_loss}
\end{equation}
where $\beta_i = D_i / D$ is the fraction of the local dataset and $F_i(w)$ is the local loss, defined as

\begin{equation}
F_i(\boldsymbol{w}) = \frac{1}{D_i} \sum_{d_{i,j}\in\mathcal{D}_i} f(\boldsymbol{w}; d_{i,j}),
\end{equation}
where $f(\boldsymbol{w}; d_{i,j})$ denotes the loss function value evaluated on the data sample $d_{i,j}$.


\begin{figure}[t]
\centerline{\includegraphics[scale=0.16]{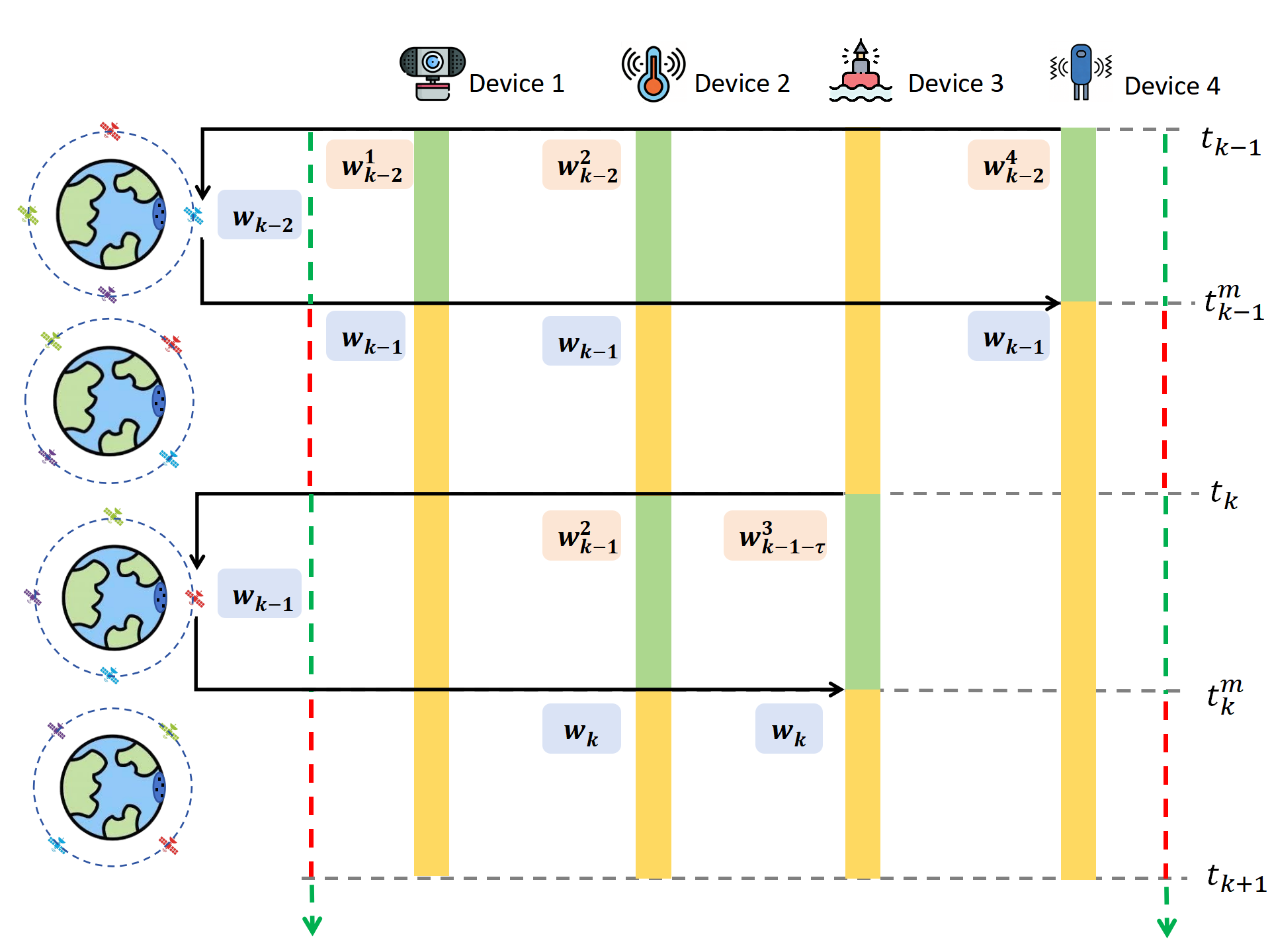}}
\caption{Illustration of CoCoFL: green solid lines denote communication phases, and yellow solid lines denote computation phases; $w_{k}^i$ means the local model update of device $i$ based on the global model $w_{k}$.}
 \vspace{-0.5cm}
\label{fig:procedure}
\end{figure}

To minimize \eqref{eq:global_loss} over intermittent satellite-ground links, we propose the CoCoFL framework to train a global model with $K$ communication rounds. As illustrated in Fig.~\ref{fig:procedure}, the duration of the $k$-th round spans from $t_k$ to $t_{k+1}$, corresponding to the interval between two consecutive satellite passes. The interval $[t_k, t_k^m)$ represents the visibility window, while $[t_k^m, t_{k+1})$ denotes the invisible window. The $k$-th round proceeds as follows.

\subsubsection{\textbf{Device Scheduling and Model Uploading}}
During the visibility window $[t_k,t_k^m)$,
ground devices can communicate with the satellite.
Due to limited visibility windows, only a subset of devices can
upload their local model updates for global aggregation.
We therefore introduce the staleness coefficient $\tau_k^i$ to 
represent the number of consecutive rounds up to round $k$, during which
device $i$ has not been scheduled for global aggregation.
Let $\mathcal I_k \subseteq \mathcal I$ denote the set of devices scheduled in round $k$, and then we give $\tau_k^i$ below:
\begin{equation}
\tau_k^i =
\begin{cases}
0, & i\in\mathcal I_k,\\
\tau_{k-1}^i+1, & i\in\mathcal I\setminus\mathcal I_k.
\end{cases}
\end{equation}
That is, the staleness coefficient of one device is reset to zero if it is scheduled for global aggregation in the current round; otherwise, it is increased by one. Furthermore, the scheduled device $i$ uploads its local model update $\boldsymbol{w}^i_{k-1-\tau_{k-1}^i}$ to the visible LEO satellite. 

\subsubsection{\textbf{Global Model Update}}
After receiving the local model updates from all scheduled devices, the satellite aggregates them together with the global model from round $k-1$, denoted by $\boldsymbol{w}_{k-1}$, to update the global model as follows:

\begin{equation}
\label{aggregate}
\boldsymbol{w}_k=\sum_{i\in\mathcal I \setminus \mathcal{I}_k}\beta_i\boldsymbol{w}_{k-1}
+\sum_{i\in\mathcal I_k}\beta_i
\boldsymbol{w}^i_{k-1-\tau_{k-1}^i}.
\end{equation}
The global model from the previous round is incorporated to mitigate fluctuations in the updated global model when only a small number of devices are scheduled in the current round.

\subsubsection{\textbf{Global Model Broadcasting}}
The updated global model $\boldsymbol{w}_k$ is
broadcast to the scheduled devices before the end of the visibility window (i.e., $t_k^m$), and simultaneously forwarded to the next satellite via inter-satellite links to ensure the continuity of the learning process.

\subsubsection{\textbf{Local Model Training}} Both scheduled and unscheduled devices update their local models and complete the updates before the end of round $k$ (i.e., $t_{k+1}$). Since the initial models differ between these two types of devices, their local updates are described separately.

After receiving the updated global model $\boldsymbol{w}_k$, the scheduled device
$i$ performs $E_k^i$ local  epochs to update its local model as follows:
\begin{equation}
\boldsymbol{w}_k^i = \boldsymbol{w}_k - \eta \sum_{e=0}^{E_k^i-1}\nabla F_i(\boldsymbol{w}_{k,e}^i),
\quad \forall i\in\mathcal I_k,
\end{equation}
where $\eta$ is the learning rate and $\nabla F_i(\boldsymbol{w}_{k,e}^i)$ denotes the gradient of local loss
function $F_i(\cdot)$ with respect to model $\boldsymbol{w}_{k,e}^i$ after $e$ local epochs. Besides, (5) implies that
\begin{equation}
   \boldsymbol{w}_{k,0}^i = \boldsymbol{w}_k, \boldsymbol{w}_k^i = \boldsymbol{w}_{k,E_k^i}^i.
\end{equation}


For unscheduled devices $i$, the local model continues to be updated based on the local model $\boldsymbol{w}^i_{k-1-\tau^{i}_{k-1}}$ obtained in round $k-1$, which is trained based on global model
$\boldsymbol{w}_{k-1-\tau_{k-1}^i}$. The update details are below:

\begin{align}\label{eq:staleness1}
\boldsymbol{w}^i_{k-{\tau^i_k}} = \boldsymbol{w}^i_{k-1-\tau^i_{k-1}} \!\!- & \eta \!\!\sum_{e=0}^{E_k^i-1} \nabla F_i\left(\boldsymbol{w}^i_{k-1-\tau^i_{k-1},e}\right), \nonumber\\
&\qquad \qquad \qquad \qquad\forall i\in\mathcal{I}\setminus \mathcal{I}_{k}.
\end{align}
with the relation
\begin{equation}
    \boldsymbol{w}^i_{k-1-\tau^{i}_{k-1}} = \boldsymbol{w}_{k-1-\tau^{i}_{k-1},\hat{E}_{k-1}^i},
\end{equation}
where $\hat{E}^i_{k-1}$ is the cumulative number of local epochs from the start of round $k-1-{\tau^i_{k-1}}$  to the end of round $k-1$, denoted by:
\begin{equation}
\label{cumulative}
    \hat{E}^i_{k-1} = \sum_{j = k-1-{\tau^i_{k-1}}}^{k - 1} E_j^i, ~\forall i\in\mathcal{I}\setminus \mathcal{I}_{k}.
\end{equation}

From the above learning workflow, it is evident that both the selection of scheduled devices and the number of local epochs play critical roles in the global model update, thereby significantly affecting learning performance.
Accordingly, we optimize these factors to accelerate learning under visibility-window-related time constraints. To this end, we model the communication and computation delays in the next subsection.



\subsection{Communication and Computing Model}

For the uplink communication from ground devices to the LEO satellite, we consider a reserved channel allocation scheme, such as frequency-division multiple access with equal bandwidth allocation, under which the uplink transmission delay for scheduled device $i$ in round $k$ is denoted by $T_k^{i,\rm{U}}$.
For downlink transmission, the satellite broadcasts the global model to all scheduled devices, during which the achievable downlink rate is determined by the device with the worst channel condition and is expressed as
\begin{equation}
r_k^{i,\rm{D}}
=
\min_{i \in \mathcal{I}_k}
\left\{
B \log_2 \left(
1 + \frac{H_i P_s G_i G_s}{B N_0}
\right)
\right\}.
\end{equation}
Here, $B$ denotes the broadcast bandwidth, and $P_s$ is the satellite transmit power. The free-space channel gain is given by $H_i = (c_0 / 4\pi f_c d_i)^2$, where $c_0$ is the speed of light in vacuum, $f_c$ and $d_i$ are the carrier frequency and the distance between device $i$ and the satellite. $G_i$ and $G_s$ are the antenna gains of the device and the satellite, and $N_0$ is the noise power spectral density. Accordingly, the downlink transmission delay of device $i$ can be written as
\begin{equation}
T_k^{i,\rm{D}} = \frac{Z}{r_{k}^{i,\rm{D}}},
\end{equation}
where $Z$ represents the size of global model in bits.

For local model updates, let $\Phi$ denote the computation workload (in FLOPs) required to process one data sample in a local epoch, and let $f_i$ denote the computing capability of device $i$ (in FLOPs per second). Accordingly, the computation delay of device $i$ for $E_k^i$ local epochs in round $k$ can be expressed as
\begin{equation}
T_k^{i,\rm{C}} = \frac{E_k^i \, \Phi \, D_i}{f_i}.
\end{equation}

\section{Convergence Analysis and Problem Formulation}
In this section, we analyze the impact of device scheduling and the number of local epochs on the convergence of the proposed CoCoFL.
Motivated by the analytical results, we then formulate a joint optimization problem to accelerate convergence.

\subsection{Convergence Analysis}
Following \cite{Ma2021JSAC,Chen2025Tcom,Luo2022Infocom}, 
we adopt the following four assumptions to facilitate the convergence analysis. 

\textit{Assumption 1.} For any $\boldsymbol{w}_1, \boldsymbol{w}_2$, the local loss function $F_i(\cdot)$ is $L$-smooth:
\begin{equation}
F_i(\boldsymbol{w}_2) - F_i(\boldsymbol{w}_1) \leq \langle \nabla F_i(\boldsymbol{w}_1), \boldsymbol{w}_2 - \boldsymbol{w}_1 \rangle + \frac{L}{2} \| \boldsymbol{w}_2 - \boldsymbol{w}_1 \|^2. \nonumber
\end{equation}

\textit{Assumption 2.} For any $\boldsymbol{w}_1, \boldsymbol{w}_2$, the local loss function $F_i(\cdot)$ is $\mu$-strongly convex:
\begin{equation}
    F_i(\boldsymbol{w}_2) - F_i(\boldsymbol{w}_1) \geq \langle \nabla F_i(\boldsymbol{w}_1), \boldsymbol{w}_2 - \boldsymbol{w}_1 \rangle + \frac{\mu}{2} \| \boldsymbol{w}_2 - \boldsymbol{w}_1 \|^2.\nonumber
\end{equation}

\textit{Assumption 3.} 
For any model $\boldsymbol{w}$, the gradient of local loss function is uniformly bounded:
\begin{align*}
\| \nabla F_i(\boldsymbol{w}) \|^2 \leq G^2,
\end{align*}
where $G$ is a positive constant. 

\textit{Assumption 4.} The deviation between local and global gradients is uniformly bounded. Specifically, for any model $\boldsymbol{w}$, there exists a constant $\xi > 0$ such that
\begin{align*}
\left\| \nabla F_i(\boldsymbol{w}) - \nabla F(\boldsymbol{w}) \right\|^2 \leq \xi^2.
\end{align*}




\begin{theorem}
Under Assumptions 1-4 and the condition that the learning rate satisfies $\eta < \mu^2 / L^3$, we have
\begin{equation}
\label{theorem1-1}
\mathbb{E}[F(\boldsymbol{w}_k) - F(\boldsymbol{w}^*)] \leq \rho^k \cdot \mathbb{E}[F(\boldsymbol{w}_0) - F(\boldsymbol{w}^*)] + \frac{R_{\max}}{1 - \rho}, 
\end{equation}
where $R_{\max}$ is a constant and $\rho \triangleq \max_k \rho_k$ lies in the range of $(0,1)$. Besides, $\rho_k$ is defined as
\begin{equation}
    \rho_k = 1 - 2\mu\eta\sum_{i \in \mathcal{I}} {\color{black}\mathbb{E}[s_k^i]} \beta_i \left(\hat{E^i_k}-\frac{\eta L^3}{\mu^2} (\hat{E^i_k})^2\right),
    \label{eq: rho_def}
\end{equation}
where $s_k^i \in \{0,1\}$ indicates whether device $i$ is scheduled in round $k$ (i.e., $s_k=1$) or not (i.e., $s_k=0$).

\end{theorem}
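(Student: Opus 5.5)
The approach is to derive a one-round recursion of the form
\[
\mathbb{E}[F(\boldsymbol{w}_k)-F(\boldsymbol{w}^*)]\le \rho_k\,\mathbb{E}[F(\boldsymbol{w}_{k-1})-F(\boldsymbol{w}^*)]+R_k ,
\]
with $R_k\le R_{\max}$, and then unroll it. Unrolling with $\rho_k\le\rho<1$ gives $\rho^k$ times the initial gap plus $\sum_{j}\rho^j R_{\max}\le R_{\max}/(1-\rho)$, which is the claim.

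\textbf{Step 1: decompose the aggregation.} Rewrite the aggregation rule \eqref{aggregate} with the indicators $s_k^i$:
\[
\boldsymbol{w}_k=\boldsymbol{w}_{k-1}+\sum_{i}s_k^i\beta_i\bigl(\boldsymbol{w}^i_{k-1-\tau^i_{k-1}}-\boldsymbol{w}_{k-1}\bigr).
\]
The local model of a scheduled device was started from the stale global model $\boldsymbol{w}_{k-1-\tau^i_{k-1}}$ and has run $\hat E^i$ cumulative epochs, by \eqref{cumulative}. Hence
\[
\boldsymbol{w}^i_{k-1-\tau^i_{k-1}}-\boldsymbol{w}_{k-1}=-\eta\sum_{e<\hat E^i}\nabla F_i(\boldsymbol{w}^i_{\cdot,e})+\bigl(\boldsymbol{w}_{k-1-\tau}-\boldsymbol{w}_{k-1}\bigr).
\]

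\textbf{Step 2: apply $L$-smoothness (Assumption 1) of $F$ at $\boldsymbol{w}_{k-1}$.} This gives
\[
F(\boldsymbol{w}_k)\le F(\boldsymbol{w}_{k-1})+\langle\nabla F(\boldsymbol{w}_{k-1}),\boldsymbol{w}_k-\boldsymbol{w}_{k-1}\rangle+\tfrac L2\|\boldsymbol{w}_k-\boldsymbol{w}_{k-1}\|^2 .
\]
In the inner product, split each local gradient as
\[
\nabla F_i(\boldsymbol{w}^i_{\cdot,e})=\nabla F(\boldsymbol{w}_{k-1})+\bigl[\nabla F_i(\boldsymbol{w}^i_{\cdot,e})-\nabla F_i(\boldsymbol{w}_{k-1})\bigr]+\bigl[\nabla F_i(\boldsymbol{w}_{k-1})-\nabla F(\boldsymbol{w}_{k-1})\bigr].
\]
The main part contributes $-\eta\sum_i s_k^i\beta_i\hat E^i\|\nabla F(\boldsymbol{w}_{k-1})\|^2$. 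By strong convexity, in the Polyak--Łojasiewicz form $\|\nabla F\|^2\ge 2\mu(F-F^*)$ (Assumption 2), this becomes $-2\mu\eta\sum_i s^i\beta_i\hat E^i(F-F^*)$, which is the linear term of $\rho_k$.

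The drift term is handled with $L$-smoothness and the bound $\|\boldsymbol{w}^i_{\cdot,e}-\boldsymbol{w}_{k-1}\|\le \eta e G$ plus a staleness offset. Use Young's inequality $\langle a,b\rangle\le \frac{\epsilon}{2}\|a\|^2+\frac1{2\epsilon}\|b\|^2$ with $\epsilon$ tuned to $\mu$. To return it to the gap, use $\|\nabla F\|^2\le 2L(F-F^*)$ and $\|\boldsymbol{w}-\boldsymbol{w}^*\|^2\le \frac{2}{\mu}(F-F^*)$. The aim is a negative-contributing quadratic term $\frac{2\eta^2L^3}{\mu}\sum_i s^i\beta_i(\hat E^i)^2(F-F^*)$, which is exactly $2\mu\eta\cdot\frac{\eta L^3}{\mu^2}(\hat E^i)^2$. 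The constant factors $L^3/\mu^2$ come from chaining smoothness twice with strong convexity once.

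\textbf{Step 3: collect the remaining terms into $R_k$.} These are the heterogeneity term (Assumption 4, giving $\eta\hat E^i\xi^2$-type terms), the second-order term $\frac L2\|\cdot\|^2\le \frac L2\eta^2(\hat E^i)^2G^2$ (Assumption 3), and the staleness term $\|\boldsymbol{w}_{k-1-\tau}-\boldsymbol{w}_{k-1}\|$. The staleness term is bounded by the accumulated global movement, at most $\eta G\sum_j\sum_i\beta_i E_j^i$ over $\tau$ rounds. Since epochs and staleness are bounded by the finite visibility-window schedule, $R_k\le R_{\max}$. Taking expectation over the scheduling randomness turns $s_k^i$ into $\mathbb{E}[s_k^i]$; this treats $s_k^i$ as independent of the current iterate, which I would take as given.

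\textbf{Step 4: check the contraction.} The condition $\eta<\mu^2/L^3$ together with $\hat E^i\ge1$ does not by itself force $\hat E^i-\frac{\eta L^3}{\mu^2}(\hat E^i)^2>0$ for large $\hat E^i$. So positivity of $\rho_k$'s decrement, and hence $\rho<1$, needs $\hat E^i<\mu^2/(\eta L^3)$, which I would impose through the epoch constraints. $\rho>0$ follows from a small $\eta$.

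The main obstacle is Step 2's drift and staleness handling. These terms must be placed so that the $\hat E^2$ term multiplies $F-F^*$ with precisely the coefficient $\eta L^3/\mu^2$, rather than being dumped into the constant. I would first try bounding the drift via $\|\boldsymbol{w}^i_{\cdot,e}-\boldsymbol{w}_{k-1}\|\le \eta e\max\|\nabla F_i\|$ using gradients near $\boldsymbol{w}_{k-1}$, then apply the smoothness and strong-convexity chain above.
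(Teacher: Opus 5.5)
The paper omits its proof (``omitted due to space limitations''), so there is no argument to compare routes against. Judged on its own, your skeleton is sound: a one-step recursion, PL on the main term, then unrolling. But Step 2 as written does not close. The step you flag as the main obstacle is left open, and the tools you name cannot deliver it. Bounding the drift by $\eta eG$ plus a staleness offset only ever yields constants, never a multiple of $F(\boldsymbol{w}_{k-1})-F(\boldsymbol{w}^*)$. Young's inequality with a fixed $\epsilon$ creates a term $\tfrac{\epsilon}{2}\eta\hat E^i\|\nabla F(\boldsymbol{w}_{k-1})\|^2$, linear in $\hat E^i$, which changes the linear coefficient $2\mu\eta\hat E^i$.

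The missing observation is that you do not need the quadratic coefficient at all. It enters $\rho_k$ with a plus sign, so $\rho_k':=1-2\mu\eta\sum_i\mathbb{E}[s_k^i]\beta_i\hat E^i_k\le\rho_k$. Since $F(\boldsymbol{w}_{k-1})-F(\boldsymbol{w}^*)\ge0$, a recursion with $\rho_k'$ implies the one with $\rho_k$. So leave the PL term untouched and bound every cross term by Cauchy--Schwarz with $\|\nabla F(\boldsymbol{w}_{k-1})\|\le G$ (Assumption 3 and $\nabla F=\sum_i\beta_i\nabla F_i$). For example, heterogeneity is at most $\eta\hat E^iG\xi$ and drift at most $\eta GL\sum_e\|\boldsymbol{w}^i_{\cdot,e}-\boldsymbol{w}_{k-1}\|$; put these, the staleness term and the second-order term into $R_k$.

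If you want the exact coefficient anyway, it comes from the term you discarded in Step 3. Split $\tfrac L2\|\boldsymbol{w}_k-\boldsymbol{w}_{k-1}\|^2$ by $\|a+b\|^2\le2\|a\|^2+2\|b\|^2$ and apply Cauchy--Schwarz to $(\sum_i s_k^i\beta_i\hat E^i)^2$ (the weights sum to at most $1$). Then use $\|\nabla F(\boldsymbol{w})\|^2\le L^2\|\boldsymbol{w}-\boldsymbol{w}^*\|^2\le\tfrac{2L^2}{\mu}(F(\boldsymbol{w})-F(\boldsymbol{w}^*))$. This gives exactly $\tfrac{2\eta^2L^3}{\mu}\sum_i s_k^i\beta_i(\hat E^i)^2$ times the gap, which is $2\mu\eta\cdot\tfrac{\eta L^3}{\mu^2}(\hat E^i)^2$ as required. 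Note also that with $R_{\max}$ unspecified the displayed bound is nearly vacuous. Assumptions 2--3 give $F(\boldsymbol{w})-F(\boldsymbol{w}^*)\le\|\nabla F(\boldsymbol{w})\|^2/(2\mu)\le G^2/(2\mu)$, so it holds with $R_{\max}=G^2/(2\mu)$; the substantive claim is $\rho\in(0,1)$.

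Your argument for $R_k\le R_{\max}$ also needs repair. Each aggregation $\boldsymbol{w}_j-\boldsymbol{w}_{j-1}=\sum_i s_j^i\beta_i\bigl[(\boldsymbol{w}_{j-1-\tau^i_{j-1}}-\boldsymbol{w}_{j-1})-\eta\sum_e\nabla F_i(\cdot)\bigr]$ also pulls the global model back toward older global models. So the per-round movement is not bounded by $\eta G\sum_i\beta_iE^i_j$, and unrolling the triangle inequality yields at best a bound that grows with $k$. Nor does the schedule cap $\tau^i_k$, since nothing forces a device to be rescheduled. Use instead $\mu\|\boldsymbol{w}-\boldsymbol{w}^*\|\le\|\nabla F(\boldsymbol{w})\|\le G$ (Assumptions 2--3), which keeps every global and local iterate within $G/\mu$ of $\boldsymbol{w}^*$. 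Hence $\|\boldsymbol{w}_{k-1-\tau}-\boldsymbol{w}_{k-1}\|\le2G/\mu$ and $\|\boldsymbol{w}^i_{\cdot,e}-\boldsymbol{w}_{k-1}\|\le2G/\mu$ for all $k$. With $\hat E^i$ capped (C4), all remaining terms are then uniformly bounded.

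Your Step 4 is right that $\rho<1$ needs $\hat E^i<\mu^2/(\eta L^3)$, which the paper imposes only afterwards through C4. However, $\rho_k>0$ needs no smallness of $\eta$: $x-ax^2\le\tfrac{1}{4a}$ gives $\rho_k\ge1-\tfrac{\mu^3}{2L^3}\ge\tfrac12$.
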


\begin{proof}
The proof is omitted due to space limitations.
\end{proof}

From Theorem 1, we derive insights for optimizing device scheduling and the number of local epochs, as summarized in the following remark.

\begin{remark}
(\ref{theorem1-1}) and (\ref{eq: rho_def}) indicate that decreasing $\rho_k$ in round $k$ helps improve both the convergence rate and accuracy. Under the assumption that each device is scheduled independently and identically distributed in each round, minimizing $\rho_k$ is equivalent to maximize the following term:
\begin{equation}
\label{eq:convergence factor}
    \tilde{\rho}_k = \sum_{i\in\mathcal{I}}s_k^i\beta_i\left(\hat{E^i_k}- a (\hat{E^i_k})^2\right),
\end{equation}
\vspace*{0.01in}

\noindent
with $ 0 <\tilde{\rho}_k < \frac{1}{2\mu\eta}$ and $a=\frac{\eta L^3}{\mu^2}<1$. Under the condition that $\tilde{\rho}_k>0$, we derive
\begin{equation}
    1\leq \hat{E^i_k} \leq \frac{1}{a}-1.
    \label{eq:condition_1}
\end{equation}
To ensure $\rho_k<1$, we have
\begin{equation}
    \sum_{i\in\mathcal{I}} s_k^i\beta_i\left(\hat{E^i_k}-a (\hat{E^i_k})^2\right) \leq b,
    \label{eq:condition_2}
\end{equation}
where $b=\frac{1}{2\mu\eta}-\epsilon$, with $\epsilon$ denoting a small positive constant.


For learning acceleration, (\ref{eq:convergence factor}), (\ref{eq:condition_1}), and (\ref{eq:condition_2}) jointly provide guidance for formulating the optimization objective and constraints.
In detail, (\ref{eq:convergence factor}) as an objective function indicates that, under communication constraints, scheduling more devices—particularly those with larger relative dataset sizes $\beta_i$—generally enhances the convergence performance. Furthermore, the quadratic dependence on the cumulative number of local epochs $\hat{E}_k^i$ in (\ref{eq:convergence factor}), together with the individual and total constraints in (\ref{eq:condition_1}) and (\ref{eq:condition_2}), underscores the importance of appropriately selecting devices based on their $\hat{E}_k^i$. If $\hat{E}_k^i$ is either too large or too small, the device should be assigned a low scheduling priority, due to either staleness information or insufficient learning in its local model.

\end{remark}

\subsection{Problem Formulation}
Guided by the convergence analysis, we formulate the following optimization problem:
\begin{equation*}
\begin{aligned}
\text{(P0)} ~\underset{ \mathcal{S}, \mathcal{E}}{\text{max}} & \; \sum_{i \in \mathcal{I}}  s^i_k \beta_i \left((\hat{E}^i_{k-1}+E^i_k) - a(\hat{E}^i_{k-1}+E^i_k)^2\right) \\
\text{s.t. } & \; \text{C1: } t_{k} + T^{i,\rm{C}}_k + {s}^i_k(T^{i,\rm{U}}_k + T^{i,\rm{D}}_k)\leq t_{k+1}, \forall i \in \mathcal{I}\\
& \; \text{C2: } t_{k} + {s}^i_k(T^{i,\rm{U}}_k + T^{i,\rm{D}}_k)\leq t^m_k, \forall i \in \mathcal{I}\\
& \; \text{C3: } \!\!\sum_{i\in\mathcal{I}} s_k^i\beta_i((\hat{E}^i_{k-1}+E^i_k)-a(\hat{E}^i_{k-1}+E^i_k)^2) \leq b\\
& \; \text{C4: } 1\leq \hat{E}^i_{k-1}+E^i_k \leq \frac{1}{a}-1,\forall i \in \mathcal{I}\\
& \; \text{C5: } {s}^i_k \in \{0, 1\},\forall i \in \mathcal{I}\\
& \; \text{C6: } {E}^i_k \in \mathbb{Z},\forall i \in \mathcal{I},
\end{aligned}
\end{equation*}
where $\mathcal{S}$ and $\mathcal{E}$ are the optimal variables, indicating the device scheduling and the number of local epochs, respectively. The objective function is derived from \eqref{eq:convergence factor}. 
Moreover, C1 guarantees that the total computation and communication time incurred by each device in a round does not exceed the round duration. C2 ensures that all communications occur within the visibility window. C3 and C4 are from \eqref{eq:condition_1} and \eqref{eq:condition_2}, jointly bounding the cumulative number of local epochs of each device.
C5 defines the binary scheduling decision and C6 enforces that the number of local epochs for each device is an integer. Note that in problem (P0), both $a$ and $b$ are constants determined through  experiments.

\section{Proposed Learning Acceleration Algorithm}
To make the nonlinear mixed-integer problem (P0) tractable, we propose a two-level algorithm that
integrates an outer Gibbs sampling for device scheduling with an inner difference of convex functions algorithm (DCA) to determine the number of local epochs. The outer and inner optimization procedures are detailed in the sequel, respectively.


\subsection{Inner Optimization via DCA}
\label{subsec:other_algo1}
With fixed device scheduling, the number of local epochs for both scheduled and unscheduled devices is optimized in this subsection. First, each local epoch $E_k^i$ is relaxed to be continuous (i.e., C6 is omitted) for tractability. Then, the number of local epochs for the two types of devices is optimized separately.

For unscheduled devices, no communication delay occurs, and their local epochs are constrained by C1 and C4 from problem (P0). The feasible range of local epochs for each unscheduled device $i$ is thus given by
\begin{equation}
E_k^{i} \in \left[0, \min\Bigg\{ \frac{(t_{k+1}-t_k) f_i}{ \Phi D_i}, \frac{1}{a}-1-\hat{E}_{k-1}^i \Bigg\}\right].
\end{equation}
Each unscheduled device performs the maximum feasible number of local epochs within this range, to align with the continual computing feature of the proposed CoCoFL. That is, 
\begin{equation}
\label{unscheduled loacl epochs}
    E_k^i =\min\left\{ \frac{(t_{k+1}-t_k) f_i}{ \Phi D_i}, \frac{1}{a}-1-\hat{E}_{k-1}^i \right\}.
\end{equation}


For the scheduled devices, the local epochs are optimized using DCA. With fixed device scheduling and omitted C6, the subproblem for these devices can be reformulated as follows: 
\begin{equation*}
\begin{aligned}
\text{(P1)} ~\underset{\mathcal{E}}{\text{max}} & \; \sum_{i \in \mathcal{I}_k}   \beta_i \left((\hat{E}^i_{k-1}+E^i_k) - a(\hat{E}^i_{k-1}+E^i_k)^2\right) \\
\text{s.t. } & \; \text{C7: } t_{k} + T^{i,\rm{C}}_k + T^{i,\rm{U}}_k + T^{i,\rm{D}}_k\leq t_{k+1}, \forall i \in \mathcal{I}_k\\
& \; \text{C8: } t_k + T^{i,\rm{U}}_k + T^{i,\rm{D}}_k\leq t^m_k, \forall i \in \mathcal{I}_k\\
& \; \text{C9: } \sum_{i\in\mathcal{I}_k} \beta_i((\hat{E}^i_{k-1}+E^i_k)-a(\hat{E}^i_{k-1}+E^i_k)^2) \leq b\\
& \; \text{C10: } 1\leq \hat{E}^i_{k-1}+E^i_k \leq \frac{1}{a}-1,\forall i \in \mathcal{I}_k,
\end{aligned}
\end{equation*}
where C8 is independent of the local epochs and thus serves as a feasibility check for the device scheduling.
Specifically, if C8 is not satisfied, the current device scheduling is infeasible; otherwise, it is feasible and problem (P1) with C7, C9, and C10 is solved to determine the number of local epochs for the scheduled devices.
C9 is non-convex but is in the form of a difference of convex (DC) function. Hence, we linearize the non-convex component in C9 to obtain its convex approximation as
\begin{equation}
\begin{aligned}
\sum_{i\in\mathcal{I}_k} \beta_i (\hat{E}^i_{k-1} + E^i_k)
- \sum_{i\in\mathcal{I}_k} a \beta_i \Big[
    (\hat{E}^i_{k-1} + E^{i,r}_k)^2  
+  \\
\qquad 2(\hat{E}^i_{k-1} + E^{i,r}_k)(E^i_k - E^{i,r}_k)
\Big]
\le b.
\end{aligned}
\label{eq:DCA6}
\end{equation}
By employing the DCA, problem (P1) with C7, C10, and \eqref{eq:DCA6} is solved iteratively until convergence. Here, $E_k^{i,r}$ in \eqref{eq:DCA6} denotes the solution obtained in the $r-1$-th iteration.


\subsection{Outer Optimization via Gibbs Sampling}
This subsection introduces a Gibbs sampling approach for device scheduling \cite{XuefeiTMC}.
At the $t$-th sampling, one of three operations—\textit{add}, \textit{remove}, or \textit{swap}—is randomly selected to generate a candidate schedule $\tilde{\mathcal{S}}_k^t$ from the current schedule $\mathcal{S}_k^t$. For this candidate schedule, the local epochs of unscheduled devices are determined according to \eqref{unscheduled loacl epochs}, while the local epochs of scheduled devices are obtained by solving problem (P1). If C8 is not satisfied, the candidate schedule $\tilde{\mathcal{S}}_k^t$ is discarded and the next sampling iteration is performed. Otherwise, the candidate objective value $\tilde{{obj}^t}$ is obtained by solving problem (P1) without C8 using the DCA, and the difference between the current and candidate objective values, $\Delta obj = {obj}^t - \tilde{{obj}^t}$, is calculated to determine the acceptance probability of the candidate schedule
\begin{equation}
\label{p_acc}
p_{\rm acc} = \frac{1}{1 + \exp(\Delta obj / \gamma)},
\end{equation}
where $\gamma > 0$ is a temperature parameter that controls the exploration–exploitation trade-off. 
With probability $p_{\rm acc}$, the candidate schedule is accepted, i.e., $\mathcal{S}_k^{t+1} = \tilde{\mathcal{S}}_k^t$ and $obj^{t+1} = \tilde{{obj}^t}$; otherwise, the current schedule is retained. 
After $T$ samplings, the obtained device schedule $\mathcal{S}_k^T$ and local epochs $\mathcal{E}_k^T$ are output as the optimized parameters for the workflow. The cumulative number of local epochs $\hat{\mathcal{E}}$ is set to $\mathcal{E}_k^T$ for scheduled devices, while for unscheduled devices it is set following \eqref{cumulative}.

For clarify, we summarize the complete CoCoFL optimization algorithm in Algorithm 1, where the resulting device scheduling and the number of local epochs are incorporated into the workflow in Section \ref{sub:CoCoFL} for learning acceleration.



\addtolength{\topmargin}{0.05in}
\begin{algorithm}[htbp]
\caption{CoCoFL optimization algorithm}
\label{alg:overall}
\begin{algorithmic}[1]
\STATE Initialize cumulative number of local epochs $\hat{\mathcal{E}} \gets \mathbf{0}$

\FOR{round $k = 1$ \TO $K$}    
    \STATE{Initialize device schedule $\mathcal{S}_k^1$, local epochs $\mathcal{E}_k^1$, and objective value ${obj}^1$}
    \FOR{sampling $t = 1$ \TO $T$}
        \STATE Generate candidate device schedule $\tilde{\mathcal{S}}_k^t$ based on $\mathcal{S}_k^t$
        \STATE For unscheduled devices, determine the number of local epochs according to \eqref{unscheduled loacl epochs}
        \IF{C8 is satisfied}
        \STATE For scheduled devices in $\tilde{\mathcal{S}}_k^t$, solve problem \text{(P1)} regardless of C8 to obtain the number of local epochs $\tilde{\mathcal{E}}_k^t$ and objective value $\tilde{{obj}^t}$ by DCA
        \STATE Update $\mathcal{S}_k^t=\tilde{\mathcal{S}}_k^t$, $\mathcal{E}_k^t=\tilde{\mathcal{E}}_k^t$, $obj^t=\tilde{obj^t}$ with probability $p_{acc}$ as defined in \eqref{p_acc}
        \ENDIF
    \ENDFOR
\STATE Output $\mathcal{S}_k^T$ and ${\mathcal{E}}_k^T$ rounded down to the nearest integer and incorporate them into the workflow.
\STATE Update $\hat{\mathcal{E}}$ for the optimization in the next round
\ENDFOR
\end{algorithmic}
\end{algorithm}

\section{Performance Evaluation} 
In this section, we first give the experimental setup and then evaluate the performance of the proposed CoCoFL optimization algorithm against representative baselines.

\subsection{Experimental Setup}

\subsubsection{System Settings}
The system comprises 40 ground devices distributed near Singapore (approximately $1^{\circ}$N, $104^{\circ}$E) and a single-plane Walker–Delta constellation of 12 LEO satellites at an altitude of 600 km. The visibility windows of the satellites are simulated using the Satellite Tool Kit (STK).

The communication parameters are configured as follows: the carrier
frequency is $f_c = 2~\mathrm{GHz}$, the total available bandwidth is
$B = 20~\mathrm{MHz}$, and the noise power spectral
density is $N_0 = -174~\mathrm{dBm/Hz}$ \cite{3gpp38811}. The transmit antenna gain of each device is $G_i = 4~\mathrm{dBi}$, while the receive antenna gain of the
satellite is $G_s = 35~\mathrm{dBi}$. The transmit power
of each device is independently and uniformly distributed within
$[0.01, 0.1]~\mathrm{W}$, whereas the satellite transmission power is fixed at $P_s = 50~\mathrm{W}$. The uplink transmission delay can be determined from the uplink transmission rate, which is calculated using the Shannon formula under the free-space channel model with equal bandwidth allocation among the scheduled devices.

Regarding local model updates, the computing capability of each device is set to $f_i = 4.8~\mathrm{GFLOPs/s}$. The Fashion-MNIST dataset, comprising $60,000$ samples, is adopted and distributed across all devices, with local datasets heterogeneous in both size and class distribution. Each device adopts the VGG-11 model, with a model size of $Z = 108~\mathrm{MB}$, and trains it using the stochastic gradient descent (SGD) optimizer. Under this model, the per-sample computation workload is set
to $\Phi = 327$~MFLOPs for Fashion-MNIST.

\begin{figure*}[htbp]
\centering
\begin{minipage}[b]{0.32\textwidth}
    \centering
    \includegraphics[width=\textwidth]{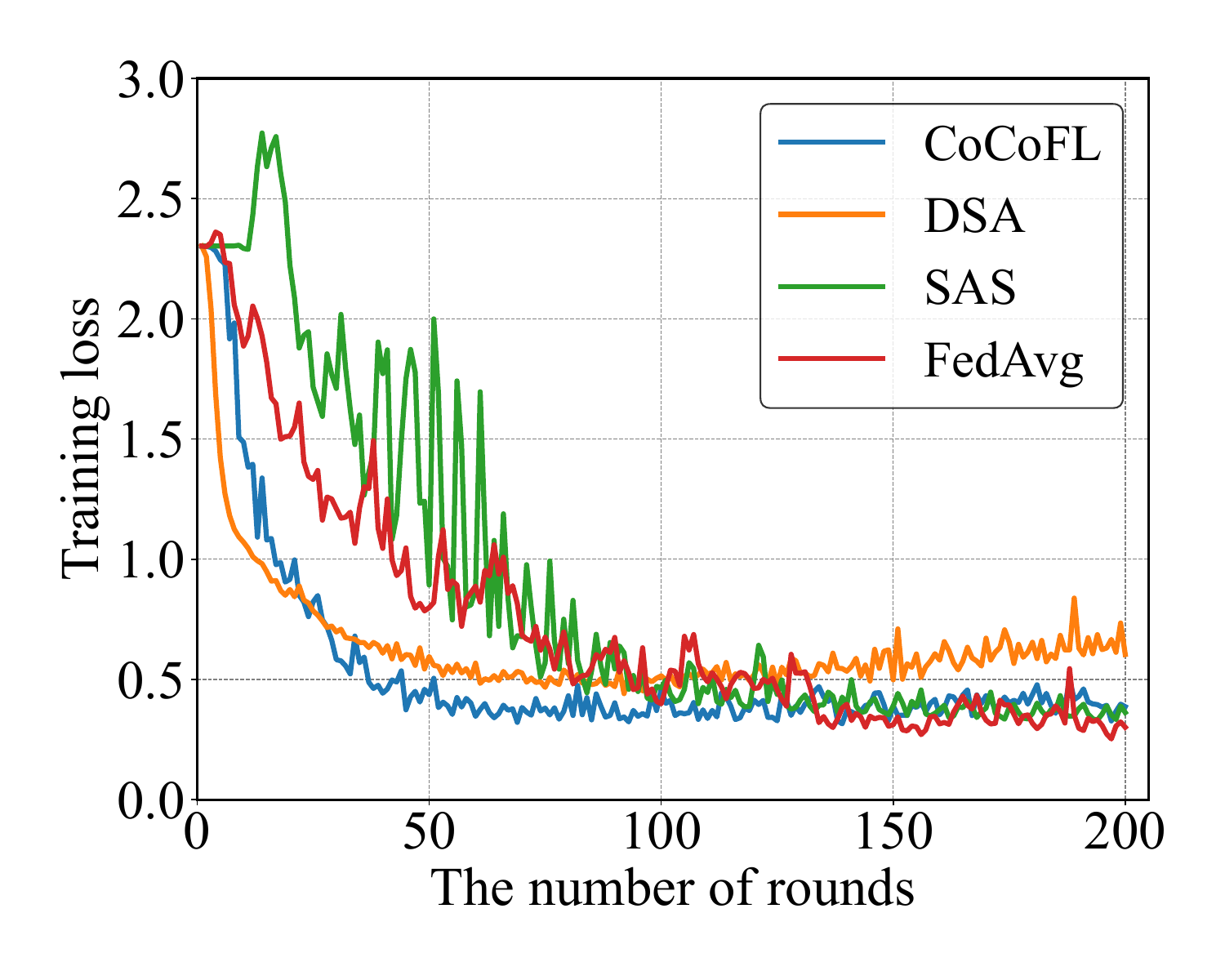}
    \vspace{-15pt}  
    \caption{Training loss vs. the number of rounds.}
    \label{fig:loss}
\end{minipage}
\vspace{-0.1cm}
\hfill
\begin{minipage}[b]{0.32\textwidth}
    \centering
    \includegraphics[width=\textwidth]{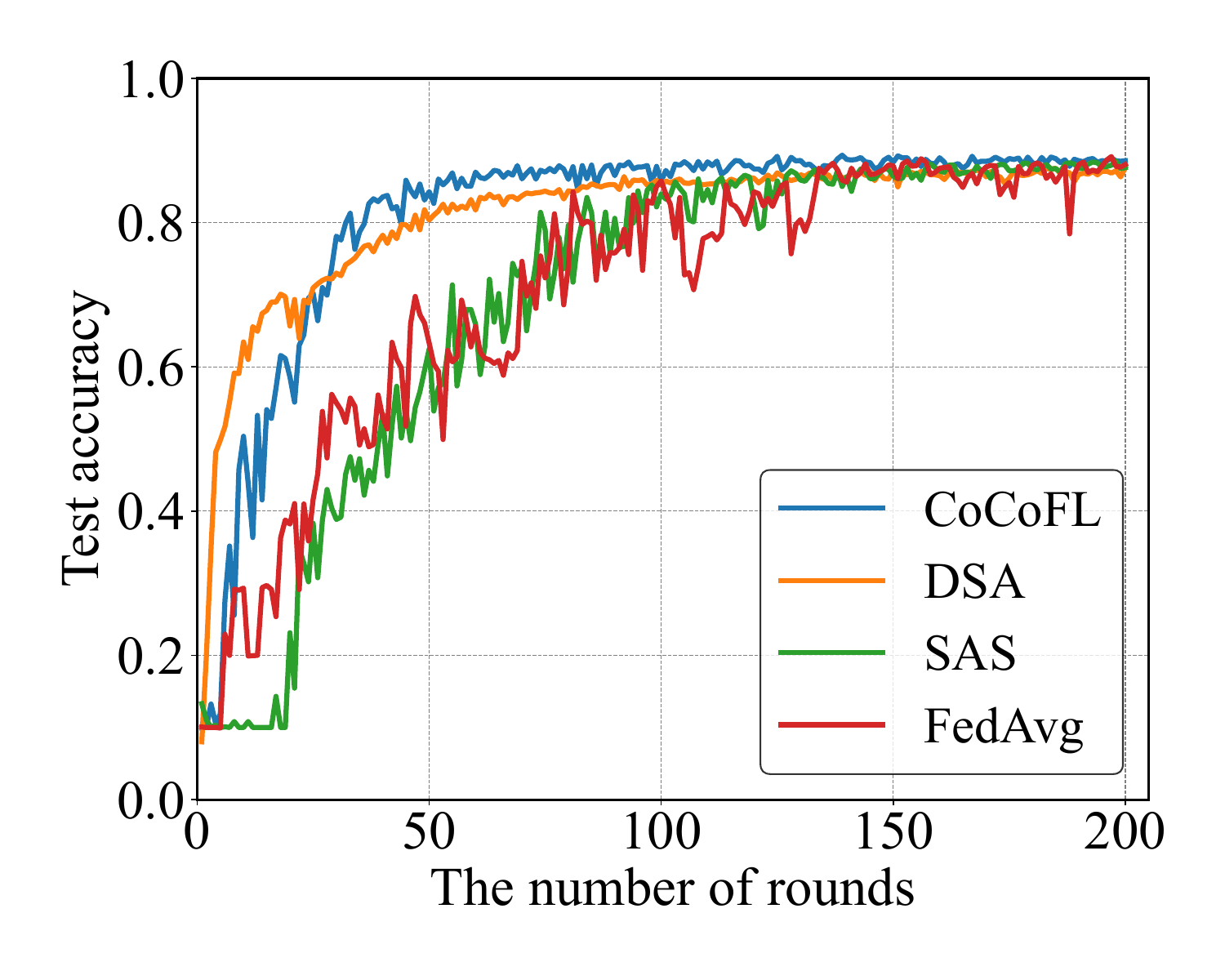}
    \vspace{-15pt} 
    \caption{Test accuracy vs. the number of rounds.}
    \label{fig:acc}
\end{minipage}
\vspace{-0.1cm}
\hfill
\begin{minipage}[b]{0.32\textwidth}
    \centering
    \includegraphics[width=\textwidth]{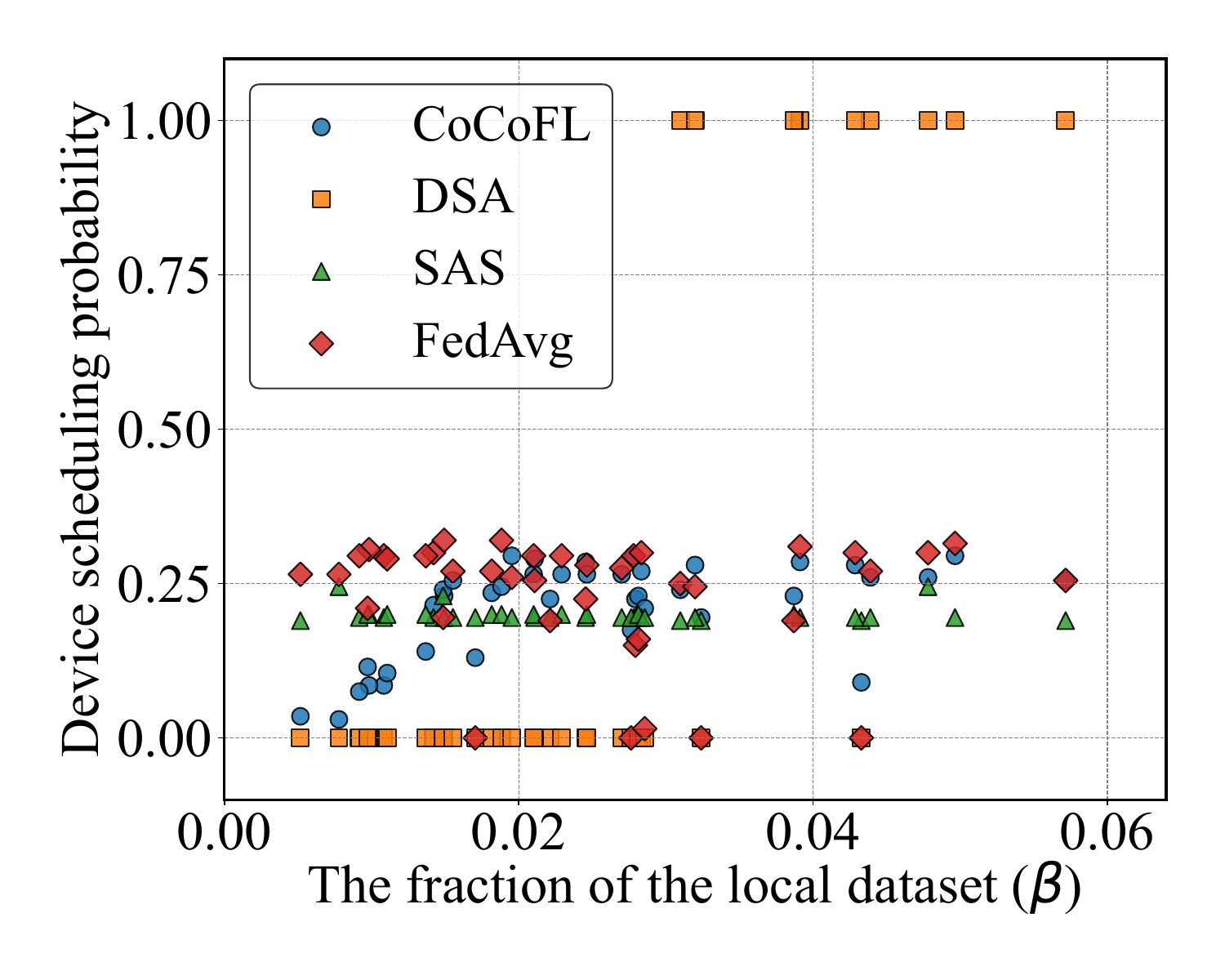}
    \vspace{-15pt} 
    \caption{Device scheduling probability.} 
    \label{fig:scheduling}
\end{minipage}
\vspace{-0.1cm}
\end{figure*}

\subsubsection{Benchmark Algorithms}
The following algorithms are considered as the baselines:

\begin{enumerate}
\item \textbf{DSA (Data-Size-Aware Device Scheduling \cite{Guo2022JSTSP})}: 
In each round, devices are sorted in descending order of local dataset size and scheduled sequentially until the communication delay of any scheduled device would exceed the satellite visibility window (i.e., C2 is not satisfied). The number of local epochs is set to the maximum value restricted by C1.


\item \textbf{SAS (Staleness-Aware Scheduling\cite{Yang2020ICASSP})}: This baseline employs Gibbs sampling to maximize the total staleness coefficient of scheduled devices subject to C2. Local epoch settings follow the same rule as in DSA.

\item \textbf{FedAvg (Federated Averaging \cite{mcmahan2017communication})}: FedAvg adopts a greedy random scheduling strategy, where devices are sequentially and randomly added to the scheduled set until C2 is violated. Local epoch settings follow the same rule as in DSA. For a fair comparison, the global model aggregation in FedAvg is revised following \eqref{aggregate}.

\end{enumerate}

\subsection{Experimental Results}

Figs.~\ref{fig:loss} and~\ref{fig:acc} show the training loss and test accuracy comparisons.
The results demonstrate that the proposed CoCoFL optimization algorithm, hereafter referred to as CoCoFL, achieves significantly faster convergence than FedAvg and SAS. The improvement over FedAvg arises from its continual computing mechanism, which helps incorporate more knowledge into the local model. Compared with SAS, CoCoFL additionally prioritizes devices with larger datasets alongside staleness consideration, ensuring that more informative updates contribute to the global model. Notably, although CoCoFL exhibits slightly slower learning improvement than DSA in early rounds, it attains lower training loss and higher test accuracy in subsequent rounds. This is because DSA always schedules the devices with larger datasets, which drives the converged model away from devices with smaller datasets and results in poor global performance.

Fig.~\ref{fig:scheduling} illustrates the scheduling probability of each device, which further reveals the source of performance gain achieved by the proposed CoCoFL. It is observed that
DSA exclusively selects devices with larger datasets while neglecting those with smaller datasets.
FedAvg schedules devices randomly, while SAS selects devices in a round-robin manner to maintain low staleness coefficients; in both cases, dataset size is disregarded in the device scheduling process.
Consequently, in FedAvg and SAS, the scheduling probabilities across devices with different dataset sizes are approximately uniform. In contrast, the proposed CoCoFL primarily prioritizes devices with larger datasets while still assigns a low scheduling probability to devices with smaller datasets, thereby promoting faster convergence and improving both training loss and test accuracy.

\section{Conclusions} 
In this paper, we have proposed CoCoFL, a continual computing based federated learning framework over intermittent satellite–ground links. 
By parallelizing local model updates of unscheduled devices and the global model aggregation of scheduled devices, CoCoFL fully exploits local computational and data sizes to improve learning performance. In addition, guided by the convergence analysis, we have developed a two-level optimization algorithm within CoCoFL to jointly optimize device scheduling and the number of local epochs for both scheduled and unscheduled devices.  Experimental results demonstrate that CoCoFL achieves superior learning performance improvements compared with baselines, in terms of faster convergence, lower training loss, and higher accuracy.

\normalem
\bibliographystyle{IEEEtran}
\bibliography{ref}
\end{document}